\newtheorem{defi}{Definition} 
\newtheorem{lemma}{Lemma}
\newcommand{\cA}{\mathcal{A}}
\newcommand{\cF}{\mathcal{F}}
\newcommand{\bR}{\boldsymbol R}
\newcommand{\bN}{\boldsymbol N}
\newcommand{\bX}{\bold{X}}
\newcommand{\bZ}{{\boldsymbol Z}}
\newcommand{\E}{E}
\newcommand{\ip}[2]{\langle #1,#2 \rangle} 
\newcommand{\ii}{\mathbbm{i}}
\author{{\L}ukasz Kidzi\'nski}
\affil{CHILI Laboratory, \'Ecole polytechnique f\'ed\'erale de Lausanne,\\ RLC D1 740, CH-1015, Lausanne, Switzerland}
\title{Functional time series}
\begin{document}
\maketitle
\begin{abstract}
The continuous advances in data collection and storage techniques allow us to observe and record real-life processes in great detail. Examples include financial transaction data, fMRI images, satellite photos, earths pollution distribution in time etc. Due to the high dimensionality of such data, classical statistical tools become inadequate and inefficient. The need for new methods emerges and one of the most prominent techniques in this context is functional data analysis (FDA).

The main objective of this article is to present techniques of the analysis of temporal dependence in FDA. Such dependence occurs, for example, if the data consist of a continuous time process which has been cut into segments, days for instance. We are then in the context of so-called functional time series.
\end{abstract}
\section{Introduction}

In this article we introduce foundations of functional time series and the frequency--domain analysis in this context. We address the article to larger audience, assuming only elementary knowledge of probability theory and algebra. Although, we try to keep the text accurate, in some fragments we sacrifice detailed investigation for intuitive argumentation, referring advanced readers to appropriate sources.

The manuscript is divided into two parts. In the first part, we introduce concepts from statistics and functional data analysis (FDA). We built upon basic ideas about continuous functions and probability. In the second part, we present some state-of-the-art results in linear models for functional objects.

\subsection{Motivation for statistics on functional data}

The main concern of statistics is to obtain essential information from a sample of observations $X_1,X_2,...,X_N$ from some space of objects. We are given a finite sample of size $N \in \bN$, where $\{X_i\}_{i\in \bZ}$ can be scalars (like heights of a sample of students in a school), vectors (like points on a target after throwing several darts) or more complex objects, like genotypes, fMRI scans, images or frames of a video.

Functional data analysis deals with observations which can be naturally expressed as functions. Figures \ref{fig:example1}, \ref{fig:example2} and \ref{fig:example3} present several cases, from various areas of science, which fit into the framework of functional data analysis.

\begin{figure}
\includegraphics[height=5cm]{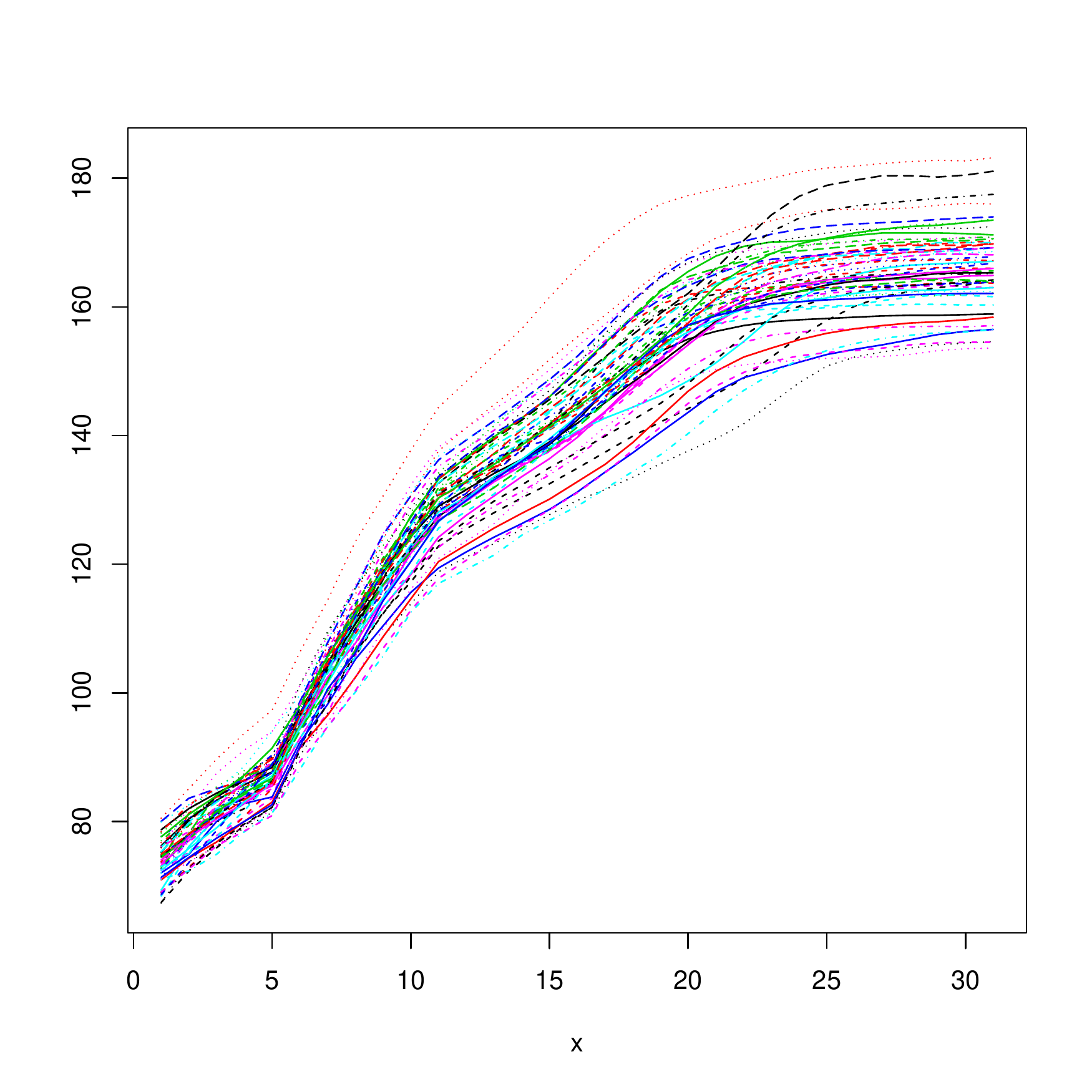}
\includegraphics[height=5cm]{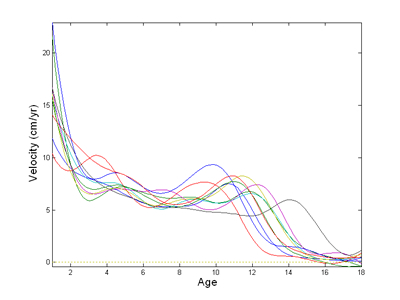}\centering
\caption{Berkeley Growth Data: Heights of 20 girls taken from ages 0 through 18 (left).
Growth process easier to visualize in terms of acceleration (right). Tuddenham and Snyder \cite{tuddenham1953physical} and Ramsey and Silverman \cite{ramsay:silverman:2005}}\label{fig:example1}
\includegraphics[height=6cm,width=10cm]{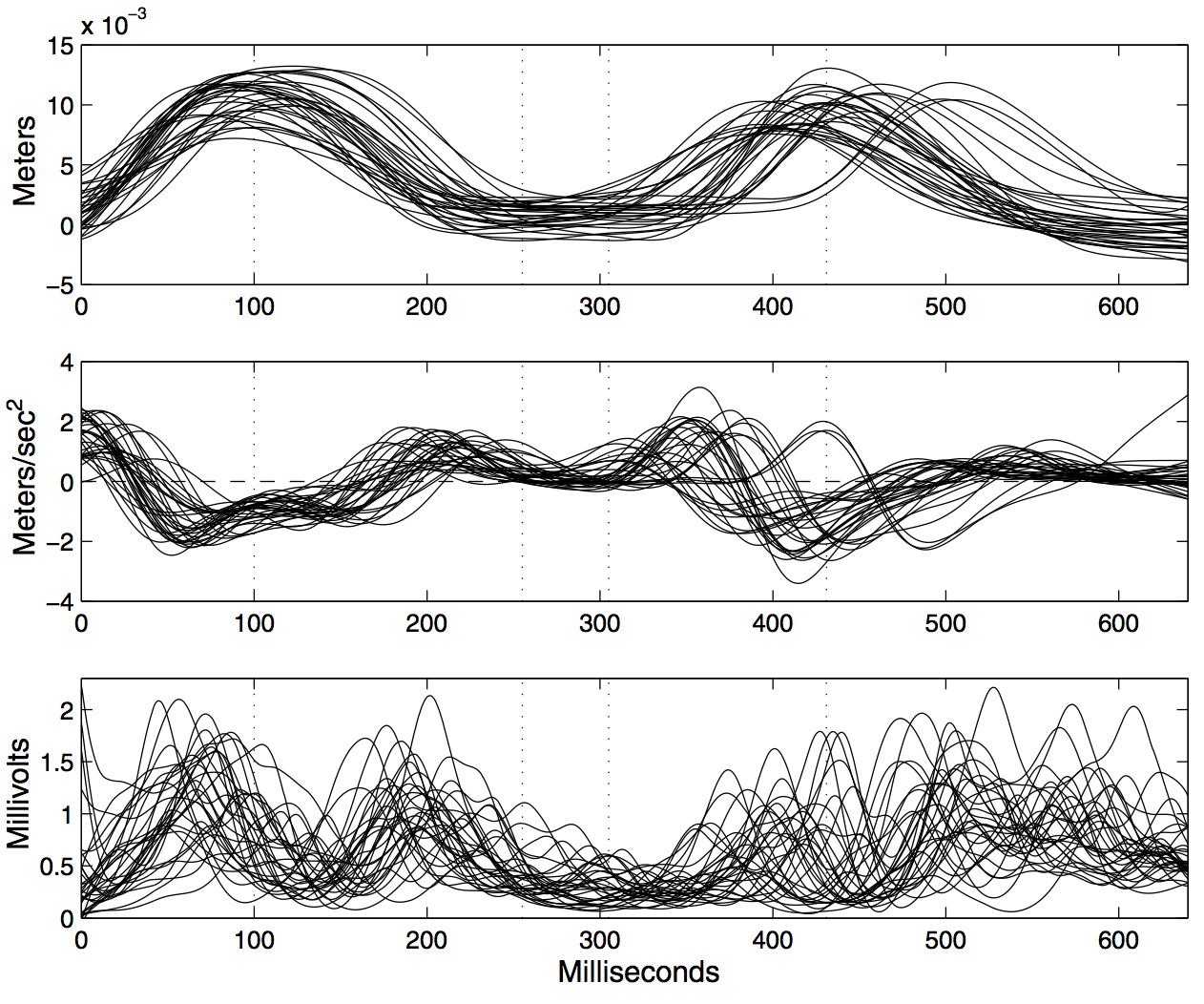}\centering
\caption{Lower lip movement (top), acceleration (middle) and EMG of a facial muscle (bottom)  of a speaker pronouncing the syllable ``bob'' for 32 replications. Malfait, Ramsay, and Froda \cite{malfait:ramsay:2003}}\label{fig:example2}
\includegraphics[height=6cm,width=10cm]{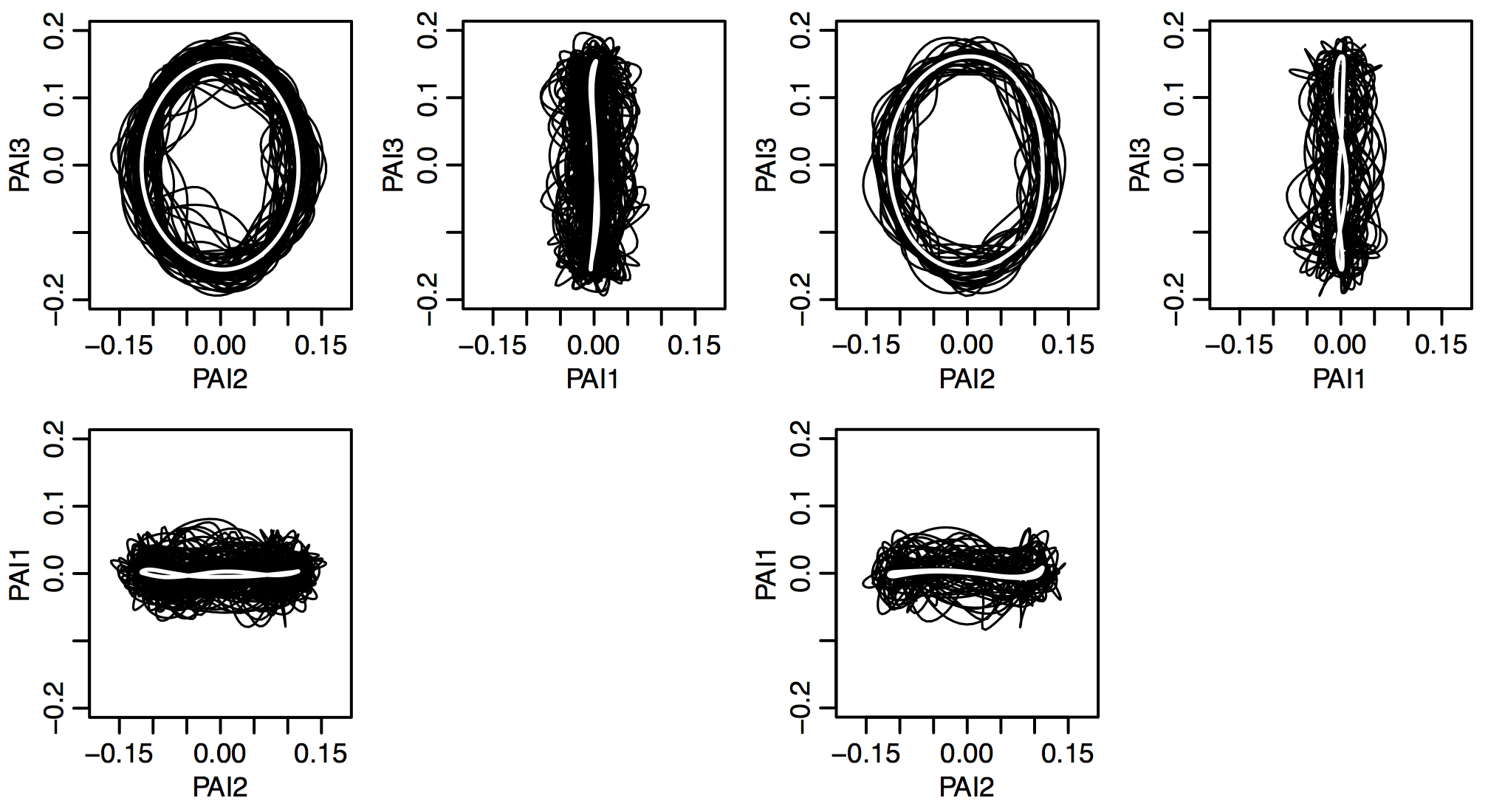}\centering
\caption{Projections of DNA minicircles on the planes given by the principal axes of inertia (three panels on the left side: TATA curves, right: CAP curves). Mean curves are plotted in white. Panaretos, Kraus and Maddocks \cite{panaretos2010second}}\label{fig:example3}
\end{figure}

When we deal with a physical process it is often natural to assume that it behaves in a continues manner and that the observations do not oscillate significantly between the measurements. Although, in the Digital Age, we rarely record analog processes continuously, we often have enough datapoints, so that interpolation doesn't cause a significant measurement error. Models incorporating this additional structure can lead to more precise and meaningful foundings. In this context, FDA can be seen as a tool which embeds the continuity feature into the model.

Except for good approximation of a continuous process, FDA can also prove to be useful in a noisy, discontinuous case. Then, FDA can serve as a tool for denoising and smoothing the data and is beneficial whenever the underlying process is the main concern, like, for example, in finance.

From a pragmatic perspective, functional data can be seen simply as infinitely dimensional vectors, with extended notion of variance and mean, and thus we may be tempted to employ classical multivariate techniques. However, there are many practical and theoretical problems that need to be addressed and this approach is not advised, as we will see later.

The FDA approach is also useful in a parsimonious representation of  the data by taking advantage of their smoothness. Instead of looking at a function as a dense vector of values, we can often represent it in an linear combination of a handful of (well--chosen) basis functions.

Practical applications of functional data analysis are spread across many areas of science and engineering. Panaretos et al.~\cite{panaretos2010second} use $[0,1] \rightarrow \bR^3$ closed curves to analyze the behavior of DNA microcircles, providing the testing methodology for the comparison of two classes of curves. Aston and Kirch~\cite{aston2013} analyze the stationarity and change point detection for functional time series, with applications to fMRI data. Hadjipantelis et al.~\cite{hadjipantelis2012} analyze Mandarin language using functional principal components.  Functional time series also naturally emerge in financial applications -- Kokoszka and Reimherr~\cite{kokoszka:reimherr:2013pred} analyze predictability of the shape of intraday price curves.  These works are only a fraction of the ongoing research and for a more accurate survey on applications and theory we refer to books \cite{ramsay:silverman:2005}, \cite{ferraty:vieu:2006}, \cite{HKbook} and \cite{bosq:2000}.

\subsection{Hilbert spaces}\label{sec:hilbert-spaces}

For most of the results presented in this work we only require a {\it separable Hilbert space}, a linear metric space with the norm function induced by the inner product and with a countable basis\footnote{In this work we introduce Hilbert spaces in an elementary and accessible way, avoiding technical details. For a formal definitions and investigation of key properties refer to \cite{rudin}.}. It makes the setup very general, but for simplicity, and in order to give an intuitive example to each of the results, in most cases we will assume a concrete space of square-integrable functions on a bounded interval $[0,1]$, to which we will refer to as $L_2$. A function $f : \bR \rightarrow \bR$ belongs to $L_2$ if and only if
\[
 \int_0^1 f^2(x) dx < \infty.
\]
In this section we present elementary properties of Hilbert spaces.

With a space $L_2$, we associate an {\it inner product}, a bilinear operator $L_2 \times L_2 \rightarrow \bR$. For two functions $f, g \in L_2$, we define the {\it inner product} as
\[
\ip{f}{g} = \int_0^1 f(x)g(x) dx.
\]
We define the norm of the element $f \in L_2$ as $\|f\| = \sqrt{|\ip{f}{f}|}$. Since $L_2$ is a space of square-integrable functions, each element $f \in L_2$ has a finite norm. We say that $f$ and $g$ are {\it orthogonal} if $\ip{f}{g} = 0$. If, additionally, $\|f\| = \|g\| = 1$, we say that $f$ and $g$ are {\it orthonormal}. Both definitions are related to vector spaces: a norm corresponds to the "distance" from $0$ function, whereas orthonormal elements behave as perpendicular vectors.

Hilbert space is called {\it separable} if we can find a series of pairwise orthonormal elements $e_1,e_2,e_3,...$ such that each element in $e \in L_2$ can be expressed as a weighted sum of elements $e_1,e_2,e_3,...$. This series $\{e_i\}_{1 \leq i}$ is called {\it a basis}, and each $e_i$ is {\it a basis function}.

One can show, that given the basis $\{e_i\}_{1 \leq i}$, the representation of $f \in L_2$ is uniquely given by
\begin{align}\label{basis:representation}
f = \sum_{i=1}^\infty \ip{e_i}{f} e_i, 
\end{align}
where scalars $\ip{e_i}{f}$ are called {\it coefficients in the basis $\{e_i\}_{1 \leq i}$}.

We may find infinitely many basis of $L_2$. As an example, often used in practice, consider the Fourier series, defined as
\[
e_i(x) = \begin{cases}
\sin(k \pi x), &\text{if } $i = 2k + 1$\\ 
\cos(k \pi x), &\text{if } $i = 2k$
\end{cases}
\]
where $k \in \bZ$ and $i \geq 1$. Several first elements are presented in Figure \ref{fig:fourier-basis}. A proof that elements are orthonormal is a simple exercise. The proof that each element in $f$ can be uniquely expressed as a linear combination \eqref{basis:representation} is more complicated and an interested reader is referred to \cite{rudin}.

\begin{figure}
\centering
\includegraphics[height=7cm]{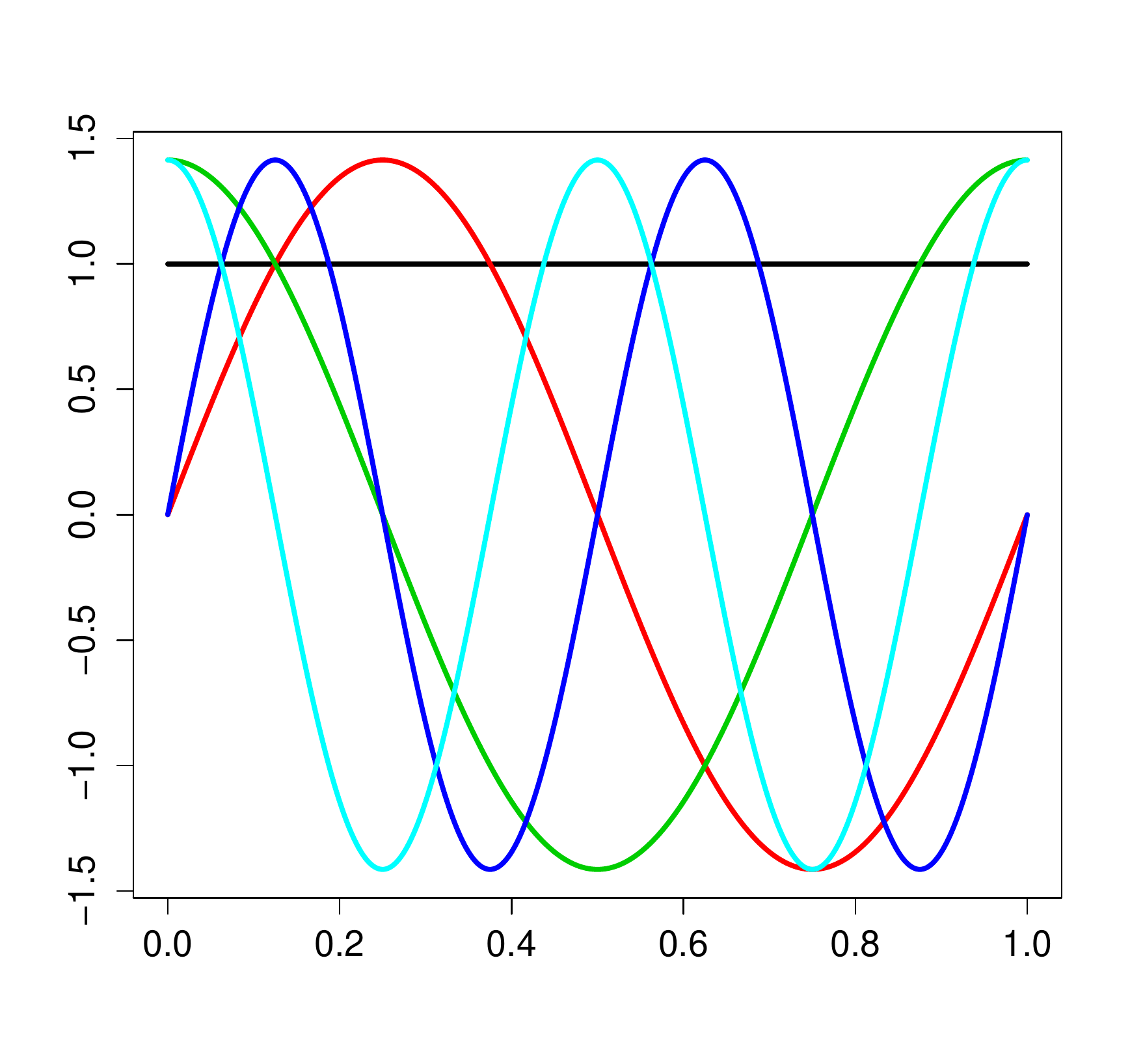}\ \ \ \ \ 
\includegraphics[height=7cm]{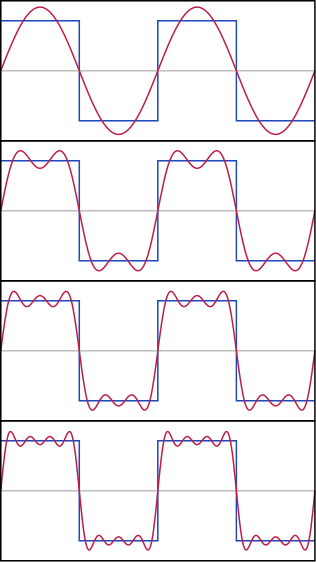}
\caption{First 5 Fourier basis functions in the order black, red, green, blue and light blue (left). Step function approximated by a weighted sum of $1,2,3$ and $4$ basis functions (right).}\label{fig:fourier-basis}
\end{figure}

We will use, one of the fundamental equations in separable Hilbert spaces, a Parseval's identity.
\begin{lemma}[Parseval's identity]
Let $f \in L_2$ and let $\{e_i\}_{1 \leq i}$ be an orthonormal basis in $L_2$. Then
\[
\|f\|^2 = \sum_{i = 1}^\infty  |\ip{e_i}{f}|^2.
\]
\end{lemma}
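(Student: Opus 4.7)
The plan is to reduce Parseval's identity to the already--stated representation \eqref{basis:representation} combined with the orthonormality relations $\ip{e_i}{e_j} = \delta_{ij}$ and the continuity of the inner product. The cleanest way is to work with finite partial sums first and then pass to the limit, so that we do not have to justify exchanging an infinite sum with the inner product directly.

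First, I would define the truncated expansion $f_N = \sum_{i=1}^N \ip{e_i}{f} e_i$ and compute
\[
\|f_N\|^2 = \ip{f_N}{f_N} = \sum_{i=1}^N \sum_{j=1}^N \ip{e_i}{f}\, \overline{\ip{e_j}{f}}\, \ip{e_i}{e_j}.
\]
By orthonormality, $\ip{e_i}{e_j} = 0$ when $i \neq j$ and equals $1$ otherwise, so the double sum collapses to the diagonal and yields $\|f_N\|^2 = \sum_{i=1}^N |\ip{e_i}{f}|^2$. This step is purely algebraic, uses only bilinearity on a finite sum, and requires no convergence arguments.

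Next, I would invoke the basis representation \eqref{basis:representation}, which guarantees that the partial sums converge to $f$ in norm, i.e.\ $\|f - f_N\| \to 0$ as $N \to \infty$. From the reverse triangle inequality $\bigl|\,\|f_N\| - \|f\|\,\bigr| \leq \|f_N - f\|$, this gives $\|f_N\| \to \|f\|$, hence $\|f_N\|^2 \to \|f\|^2$. Combining this with the identity from the previous step shows that the partial sums $\sum_{i=1}^N |\ip{e_i}{f}|^2$ converge to $\|f\|^2$, which is precisely the statement of the lemma.

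The one step that deserves care, and which I would flag as the main obstacle at this introductory level, is the justification that the series in \eqref{basis:representation} does converge in the $L_2$ norm (not only pointwise or in some weaker sense). The paper cites \cite{rudin} for this fact for the Fourier basis; in a general separable Hilbert space it follows from Bessel's inequality, which guarantees that the partial sums $f_N$ form a Cauchy sequence, together with completeness of the space and the defining property of a basis. Once this norm convergence is granted, the rest of the proof is the short bilinear computation outlined above.
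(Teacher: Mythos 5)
Your proof is correct, and it reaches the same identity by the same underlying idea---expand $f$ in the orthonormal basis and use $\ip{e_i}{e_j}=\delta_{ij}$ to collapse the double sum---but your execution is genuinely more careful than the paper's. The paper substitutes the infinite expansion \eqref{basis:representation} directly into $\ip{f}{f}$ and pulls the infinite sums out of the inner product term by term, which silently uses continuity of the inner product with respect to norm convergence of the partial sums; that interchange is exactly the step your finite-truncation argument avoids. By computing $\|f_N\|^2=\sum_{i=1}^N|\ip{e_i}{f}|^2$ with purely finite bilinearity and then passing to the limit via $\|f-f_N\|\to 0$ and the reverse triangle inequality, you isolate the only analytic input (norm convergence of the basis expansion, which you correctly flag as resting on Bessel's inequality and completeness) and keep the rest algebraic. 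What the paper's route buys is brevity and a single displayed chain of equalities suited to an introductory exposition; what yours buys is that every equality is justified at the level of rigor the lemma deserves. One small remark: the paper's closing justification contains a typo (``$\ip{e_j}{f}\langle e_i,e_j\rangle=\ip{e_i}{f}$ if $i\neq j$ and $0$ otherwise'' should read ``if $i=j$''), and your version of the orthonormality step states this correctly.
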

\begin{proof}
It follows from the definition of the norm
\begin{align*}
\|f\|^2 &= \langle f, f \rangle = \left\langle \sum_{i = 1}^\infty \ip{e_i}{f}e_i, \sum_{i = 1}^\infty \ip{e_i}{f}e_i \right\rangle\\
&=  \sum_{i = 1}^\infty \ip{e_i}{f} \left\langle e_i, \sum_{j = 1}^\infty \ip{e_j}{f}e_j \right\rangle\\
&=  \sum_{i = 1}^\infty \ip{e_i}{f} \sum_{j = 1}^\infty \ip{e_j}{f}\langle e_i, e_j \rangle\\
&=  \sum_{i = 1}^\infty  |\ip{e_i}{f}|^2,
\end{align*}
because $\ip{e_j}{f}\langle e_i, e_j \rangle = \ip{e_i}{f}$ if $i \neq j$ and $0$ otherwise, due to orthonormality of the series $\{e_i\}_{1 \leq i}$.
\end{proof}

We will also use the notion of {\it linear operators} and {\it Hilbert--Schmidt operators}. A {\it linear operator} $F~:~L_2~\rightarrow~L_2$ is a function such that for any pair of scalars $a,b$ and elements $v,w \in L_2$, $F(av + bw) = aF(v) + bF(w)$. A {\it Hilbert Schmidt--operator} $F$ is a linear operator, such that
\[
   \sum_{i=1}^\infty \|F(e_i)\|^2 < \infty,
\]
where $\{e_i\}_{i \geq 1}$ is an orthonormal basis of $L_2$. An operator is {\it symmetric} if $F(v) = F(-v)$ for each $v \in L_2$.

\subsection{Representation and fit}

In practice we are often given just a sample of observations from a curve and we need to interpolate, i.e. draw a curve between these points which is most likely to be close to the underlying process.

In this work, we follow the ideas popularized by Ramsey and Silverman \cite{ramsay:silverman:2005}, based on a basis function expansion \eqref{basis:representation}. Note that, given the representation \ref{basis:representation}, by the Parseval's identity, for any $\varepsilon > 0$, there exists $d$, such that
\[
\sum_{i = d}^\infty |\ip{e_i}{f}|^2 < \varepsilon.
\]
We can therefore approximate the function with arbitrary precision $\varepsilon > 0$ using only the first $d$ basis elements. 

In practice, inner products, which are typically obtained by integration, will be themselves approximated by corresponding sums. Then, a discretized sample curve $(x(t_j)\colon 1\leq j\leq n)$ can be transformed into a (finite dimensional) curve $y(t)$ through
$$
y(t):=\sum_{i=1}^d\left(\sum_{j=1}^nf(t_j)e_i(t_j)(t_j-t_{j-1})\right) e_i(t),
$$
for some grid $0 \leq t_1 < t_2 < ... < t_n \leq 1$, where the expression in brackets accounts for the approximation of the inner product between $f$ and $e_i$.

Fitting and representation of functional data is an important and intensively studied topic on its own, however, in this article we assume that datapoints are {\it fully observed}, i.e. we observe the whole curves, instead of just a sample of points. For more information on fitting we refer to \cite{ramsay:silverman:2005}.

\subsection{Statistics in Hilbert spaces}

Random function is the key concept used in the sequel. We can think of it as an extension of a random value or a random vector from $\bR^2$. Instead of drawing a random point from a plane, we draw a whole function from all functions in $L_2$. In this section we introduce a mean function and a covariance operator. We will extend the concept of expected value of a scalar variable to vectors and functions.

In the space $\bR^2$, we refer to a mean as an expected value on each coefficient, for example, having a random vector $\bX = (X_1,X_2)$, we define a mean of this variable as $\E \bX = (\E X_1, \E X_2)$.

Similarly we can look at a random function $X$. We can define it's mean as a mean on each coefficient. Let's take a fixed basis $\{ e_i \}_{1 \leq i}$ and an expansion of $X$, given by
\[
  X = \sum_{i=1}^\infty \ip{e_i}{X}e_i.
\]
Then, random variables $\ip{e_i}{X}$ correspond to coefficients, so a mean function can be defined as 
\[
  \E X = \sum_{i=1}^\infty (\E \ip{e_i}{X}) e_i,
\]
where we defined the expectation of random function as a sum of scalar expectations $\E \ip{e_i}{X}$. Note that, as in scalar case $\E \ip{e_i}{X}$ may not be finite and then also $\E X$ will not exist. Moreover, in order to have $\E X \in L_2$, we need the series $(\E \ip{e_i}{X})_{1 \leq i}$ to be square summable.

Defining covariance in functional space is more complicated and, in order to show a direct relation, we will recall a non-standard representation of a covariance in multidimensional space. One of the ways to look at a covariance of a random vector $\bX$ in $\bR^2$, is to see it as a linear function $\bR^2 \rightarrow \bR^2$, defined as
\[
C(v) = \E \ip{\bX}{v} \bX,
\]
where $v \in \bR^2$. We have $C(v) = (\E \bX \bX') v$ (and we refer to the variance as $C = \E \bX\bX'$). Note, that since $\ip{\bX}{v}$ is a scalar, $\ip{\bX}{v} \bX$ is a random vector and therefore we just use the definition of the mean vector. 

In the functional space, we will define a covariance operator as an operator $L_2 \rightarrow L_2$. For a random element $X$ in $L_2$, similarly to the vector case, we take the expectation of $\ip{\bX}{v} \bX$, i.e.
\[
C_X(f) = \E \ip{X}{f} X,\ \ \text{where } f \in L_2.
\]
We call $\ip{X}{\ \cdot\ } X$, an outer product and write $X \otimes X = \ip{X}{\ \cdot\ } X$. Again, we consider only such random functions $X$ that $C_X$ exists. One can show that the operator $C_X$ is a positively definite, symmetric Hilbert--Schmidt operator and therefore it can be inversed.

\section{Functional Time Series}

In many practical situations functions are naturally ordered in time. For example, when we deal with daily observations of the stock market or with sequences of tumor scans. Then, we are in the context of so--called functional time series (FTS).

\begin{figure}
\includegraphics[height=10cm,width=6cm,angle=90]{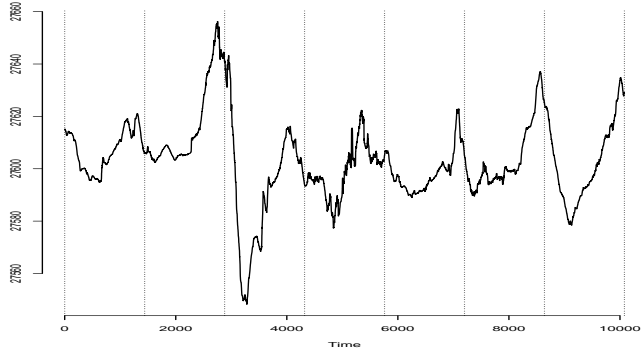} \centering
\caption{Horizontal component of the magnetic field measured in
one minute resolution at Honolulu magnetic observatory from 1/1/2001
00:00 UT to  1/7/2001 24:00 UT. 1440 measurements per day.}\label{fig:honolulu}
\end{figure}

As a motivating example consider Figure \ref{fig:honolulu}. Here, the assumption of independence can be too strong -- values at the beginning of each day are highly correlated with those at the end of the preceding day. Moreover, we see that big jumps are often followed by significant drops.

These, and similar features, may indicate significant temporal dependence not just within a subject, but also between different subjects (e.g.\ days). In this section we discuss possible frameworks which allow to quantify, and use this additional information.

\subsection{Stationarity}

Many physical processes are known to have time-invariant distribution. This motivates the frequentist approach to time series, where we assume that the structure does not change in time and we interfere from estimated covariances.

Let $\{X_t\}$ be a series of random functions. We say that $\{X_t\}$ is {\it stationary} in the {\it strong--sense} if for any $h \in \bZ, k \in \bN$ and any sequence of indices $t_1,t_2,...,t_k$ vectors $(X_{t_1},...,X_{t_k})$ and $(X_{h + t_1},...,X_{h + t_k})$ are identically distributed.

We also define {\it weak stationary} by looking only on the second order structure of the series. We say that $\{X_t\}$ is {\it weakly stationary} if $E\|X_t\|^2<\infty$ and
\begin{enumerate}
\item $\E X_t = \E X_0$ for each $t \in \bZ$ and
\item $\E X_t \otimes X_s = \E X_{t-s} \otimes X_0$ for each $t,s \in \bZ$.
\end{enumerate}

Additionally, we will assume that a series in the sequel are {\it weakly dependent}, which intuitively means that observations from the {\it far past} have little to no effect on the present. Many frameworks were suggested to quantify this behavior, for a survey on most popular ones refer to \cite{kidzinski}. For simplicity in this work we just assume a very strong condition, of a weakly dependent series as a stationary series for which
$$ \sum_{t=0}^\infty \E \|X_t \otimes X_0\|^2 < \infty, $$
meaning that the covariance between far elements decays very fast.

\subsection{Functional linear regression}
\label{sec:linreg}

One of the most popular frameworks in classical statistics is the linear regression, where we try to quantify the linear dependence between two (possibly multivariate) variables $X$ and $Y$. The problem of finding the relation of this type can be also addressed in FDA. As an illustrative example we can think of the relation between some farm's income during the year (a function of time defined on a yearly interval) and precipitation over a year.

We assume the model
\begin{equation}
Y_t = A(X_t) + \varepsilon_t,\quad t\geq 1, \label{eq:linmod}
\end{equation}

where $A$ is a linear Hilbert-Schmidt operator from $L_2 \rightarrow L_2$ and $\varepsilon_t$ is a sequence of independent identically distributed functions white noise sequence, independent from $(X_t)$.

As we are concerned with functional time series, we will assume that $X_t, Y_t$ are weakly stationary and weakly dependent. Classical case of iid $X_t$ is of great scientific interest and the interested reader is referred to \cite{ramsay:silverman:2005} and \cite{yao:muller:wang:2005AS}. 

Although the functional linear regression shares many properties with its multivariate equivalent, there are important theoretical difficulties, which preclude direct extension of the results from the simpler setup. Especially, we note that the linear operator $A : L_2 \rightarrow L_2$ is infinitely dimensional, which considerably complicates the estimation. If we approach the problem in the classical way by multiplying both sides of \eqref{eq:linmod} by $X_t$ and taking the expectation, for $t \geq 1$ we have
\begin{align*}
\E Y_t \otimes X_t = \E A(X_t)\otimes X_t + \E \varepsilon_t \otimes X_t = \E A(X_t)\otimes X_t = A(\E X_t\otimes X_t),
\end{align*}
by independence of $X_t$ and $\varepsilon_t$. Now, for convenience, let's denote this by
\begin{equation}
C^{XY} = A C^X, \label{eq:varcov}
\end{equation}
where $C^{XY}$ is the cross-covariance operator of $X$ and $Y$ and $C^X$ is the covariance of $X$. Now, the natural way to obtain $A$ is to apply the inverse of $C^X$ to both sides of the equation \eqref{eq:varcov}, which yields
\[
A = C^{XY} (C^X)^{-1}.
\]
The main problem is that the operator $(C^X)^{-1}$ is no longer bounded. Indeed, the domain of $C^X$ is only a subset $D$, say, of $L_2$. To see this, note that formally, as the inverse of $C^X$ is a linear operator, we may express
$(C^X)^{-1}(x)=\sum_{k\geq 0}\lambda_k^{-1}\langle e_k,x\rangle e_k$, where $\lambda_k$ and $e_k$ are the eigenvalues (tending to zero) and eigenfunctions of $C^X$. Hence, $D=\{x\in L_2 \colon \sum_{k\geq 1}\langle x,e_k\rangle^2\lambda_k^{-2}<\infty\}$.
The problem can be approached by some regularization. E.g.\ one may replace $(C^X)^{-1}$ by a finite dimensional approximation of the form $\sum_{k\leq K}\lambda_k^{-1}e_k\otimes e_k$, where $K$ is a tuning parameter. This is still quite delicate, when applied to the sample version. Then for large values of $K$, if we underestimate one of the small eigenvalues, its reciprocal explodes and will lead to very instable estimators. On the other hand, for small $K$ we may get a very poor approximation of $A$.

This difficulty was addressed by Bosq \cite{bosq:2000}, who gives an extensive survey on the problem. However, proposed results are based on strong assumptions on the rate of convergence of eigenvalues, which are impossible to check in practice. Alternative, elementary data--driven approach, was suggested in \cite{kidzinski}.

Finally, note that exactly the same technique can be used for lagged linear regression, i.e. where the response $Y_t$ depends linearly not only on the current observation $X_t$ but also on the whole series $X_t$. Consider
\begin{equation}\label{eq:lagged}
Y_t = \sum_{k=0}^m A_k(X_{t-k}) + \varepsilon_t,
\end{equation}
where $m \in \bN$. Again, as an example, we can think of it as an income during a given year based on precipitation in last 3 years. 

Let $m$ be the largest lag that we want to take into account and let us introduce $Z_t = (X_{t},X_{t-1},...,X_{t-m}) \in L_2^m$. One can easily show that the space $L_2^m$ is a Hilbert space. Then, the model can be written as
\begin{equation}
Y_t = B Z_t + \varepsilon_t,\label{eq:reduced-lagged}
\end{equation}
where $B : L_2^m \rightarrow L_2$ is a linear operator such that $B Z_t = \sum_{k=0}^m A_k(Z_{t}^{(k)})$.

Now, for estimating $B$ in \eqref{eq:reduced-lagged}, we can apply the same estimation procedures as in \eqref{eq:linmod}. This method of estimation in lagged regression models is efficient only for small dimensions and small $m$, as opposed to the technique briefly introduced in the following section.

\subsection{Frequency-domain methods}

The lagged linear model \eqref{eq:lagged} can be linked with the concept of linear filtering, popular in multivariate time series as well as in signal processing. For the theory and survey on applications in this context we refer to the classical book of Oppenheim and Schafer \cite{oppenheim1989discrete}.

\begin{defi}
We say that $\cA = \{ A_k \}_{k \in \bZ}$ is a linear filter if for each $k \in \bZ$, $A_k \in L_2 \rightarrow L_2$ is a linear operator and $\sum \|A_t\|^2 < \infty$.
\end{defi}

In order to find a method for estimation of operators $A_t$ in \eqref{eq:lagged}, in a more efficient way than in \eqref{sec:linreg}, we employ Fourier analysis and results from the seminal work of Brillinger \cite{brillinger:1975}.

The Fourier transform has two important properties which simplify analysis of the process \eqref{eq:linmod}. First, multiplication in the frequency domain is equivalent to convolution in the time domain. Second, the Fourier transform is a bijection, so results in frequency domain are equivalent to these in the time domain.

To illustrate the usage of these features let us multiply equation \eqref{eq:linmod} by $X_s$ for some $s \in \bZ$ and take the expectation. By linearity we have
\begin{align*}
\E Y_t \otimes X_s &= \sum_{k \in \bZ} A_k\E X_{t - k} \otimes X_s,
\end{align*}
and by stationarity
\begin{align*}
\E Y_{u} \otimes X_0 &= \sum_{k \in \bZ} A_k\E X_{u - k} \otimes X_0,
\end{align*}
where $u = t - s$. Now, noting that on left we have $C^{YX}_u$ and on right we have the convolution of $A_k$ and $ C^{YX}_{u}$, the Fourier transform of both sides yields the so--called cross-spectral operator between $\{Y_t\}$ and $\{X_t\}$ and can be obtained as
\begin{align}
\cF^{YX}_\theta &= \cA(\theta) \cF^{X}_\theta, \label{eq:fundamental}
\end{align}
where $\cA(\theta)=\sum_{k\in\mathbb{Z}}A_ke^{\ii k\theta}$ is the \emph{frequency response function} of the series $\{ A_k \}_{k \in \bZ}$, $\cF^{YX}_\theta = \frac{1}{2\pi}\sum_{k\in\mathbb{Z}}(C^X)^ke^{-\ii k\theta}$ is the \emph{spectral density operator} of $\{Y_t\}$ and $\{X_t\}$ and $\cF^{X}_\theta$ is the spectral density operator of $\{X_t\}$.

Having relation \eqref{eq:fundamental}, again we can invert $\cF^{X}_\theta$ and obtain a closed--form expression for $\cA(\theta)$. Now, the inverse Fourier transform gives us coefficients of the model \eqref{eq:lagged}. Note that, although we employ more sophisticated tools than in linear regression \eqref{eq:varcov}, symbolically approach presented here is analogical, but developed in the frequency domain.

\bibliographystyle{plain}

\end{document}